\newtheorem{lemma}{Lemma}
\newtheorem{thm}{Theorem}
\newcommand{\N}{{\sf N}}
\newcommand{\D}{\sf{D}}
\newcommand{\I}{\mathscr{I}}
\newcommand{\ent}{{\sf S}}  %
\newcommand{\er}{\mathscr{R}} 
\newcommand{\erq}[1]{\er_{#1}} 
\newcommand{\face}{\mathscr{F}}   
\newcommand{\qec}[1]{\overline{\Sigma}^*_{#1}} 
\newcommand{\poly}[1]{\Sigma_{#1}} 
\definecolor{THcolor}{rgb}{0.1,0.7,0.1}
\definecolor{VHcolor}{rgb}{0.7,0.3,0.9}
\definecolor{MRocolor}{rgb}{0.1,0,1}
\begin{document}

\preprint{}

\hfill CALT-TH 2023-050

\title{Inner bounding the quantum entropy cone with subadditivity and subsystem coarse grainings}

\author{Temple He}
\affiliation{Walter Burke Institute for Theoretical Physics \\ California Institute of Technology, Pasadena, CA 91125 USA}
\email{templehe@caltech.edu}

\author{Veronika E. Hubeny}
\affiliation{Center for Quantum Mathematics and Physics (QMAP)\\ 
Department of Physics \& Astronomy, University of California, Davis, CA 95616 USA}
\email{veronika@physics.ucdavis.edu}

\author{Massimiliano Rota}
\affiliation{Center for Quantum Mathematics and Physics (QMAP)\\ 
Department of Physics \& Astronomy, University of California, Davis, CA 95616 USA}
\affiliation{Institute for Theoretical Physics, University of Amsterdam,
Science Park 904, Postbus 94485, 1090 GL Amsterdam, The Netherlands}
\email{maxrota@ucdavis.edu}

\date{\today}

\begin{abstract}
We show via explicit construction that all the extreme rays of both the three-party quantum entropy cone and the four-party stabilizer entropy cone can be obtained from subsystem coarse grainings of specific higher-party quantum states, namely extreme states characterized by saturating a (non-trivial) maximal set of instances of subadditivity. 
This suggests that the study of the ``subadditivity cone'', and the set of its extreme rays realizable in quantum mechanics, provides a powerful tool for deriving inner bounds for the quantum and stabilizer entropy cones, as well as constraints on new inequalities for the von Neumann entropy.  
\end{abstract}

\maketitle


\section{\label{sec:intro} Introduction}

The derivation of the fundamental inequalities satisfied by the von Neumann entropy is an important problem in quantum information theory. Unfortunately, it is also notoriously difficult. Since the proof of strong subadditivity (SSA) by Lieb and Ruskai half a century ago  \cite{1973ssa:Lieb,Lieb:1973cp}, no new \textit{unconstrained} inequality has been found. New inequalities involving four or more parties were discovered in \cite{2005:Linden,2011:Cadney,Christandl_2023}, but they are \textit{constrained}, meaning that they only apply to particular density matrices which saturate other entropic constraints. In the classical case, i.e., for the Shannon entropy, new inequalities have been derived in \cite{1998:Zhang, 2002:Makarychev, 2003:Zhang, 2007:Matus}, and \cite{2011:Cadney} speculated that they might hold also in quantum mechanics. However, to our knowledge, at present there is no strong argument that new unconstrained inequalities for the von Neumann entropy should exist. 

The main goal of this paper is to show that a seemingly much simpler problem, the \textit{quantum marginal independence problem} (QMIP) \cite{Hernandez-Cuenca:2019jpv}, originally motivated by an approach \cite{Hubeny:2018trv,Hubeny:2018ijt,Hernandez-Cuenca:2022pst} to the study of entropy inequalities in quantum gravity \cite{Ryu:2006bv,Hayden:2011ag, Bao:2015bfa}, might provide a powerful tool for the derivation of constraints on the existence of new inequalities and their structure. Specifically, we will show that already from a partial solution to the ``extremal'' version of the QMIP up to nine parties, we can derive definite \textit{inner bounds} for the quantum entropy cone, which in the case of three parties coincide with the full cone \cite{1193790}, and in the case of four parties coincide with the entropy cone of stabilizer states \cite{Linden:2013kal}. We stress that even though we are not in this paper deriving new bounds on quantum entropy cones for four or more parties, our method introduces a conceptual idea that may prove fruitful to explore further.

The structure of the paper is as follows. In \S\ref{sec:cones} and \S\ref{sec:qmip}, we review the main definitions concerning entropy cones, the QMIP, and its extremal version. In \S\ref{sec:inner_bounds}, we formulate the inner bound to the quantum entropy cone for an arbitrary number of parties. In \S\ref{sec:hypergraphs}, we review the definition of hypergraph models from \cite{Bao:2020zgx} and the main result from \cite{Walter:2020zvt} about their realizability by stabilizer states. We then use this technology in \S\ref{sec:examples} to derive the main result of this paper, and conclude in \S\ref{sec:discussion} with a list of open questions.

\section{The quantum entropy cone} 
\label{sec:cones}

A convenient framework to study entropy inequalities (and their relations) was introduced in \cite{1193790}, following the analogous work of \cite{641556} for the Shannon entropy. Let $[\N]=\{1,2,\ldots,\N\}$, and consider an $\N$-party density matrix $\rho_{\N}$ on a Hilbert space $\mathcal{H}_1\otimes\mathcal{H}_2\otimes\ldots\otimes\mathcal{H}_{\N}$. The \textit{entropy vector} of $\rho_{\N}$ is the vector in $\mathbb{R}^{\D}$, with $\D=2^{\N}-1$, given by
\begin{equation}
    \vec{\ent}=\{\ent_{\I},\;\forall\, \I\subseteq [\N]\},\qquad \ent_{\I}:=\ent(\rho_{\I}),
\end{equation}
where $\rho_{\I}$ is the reduced density matrix for the parties in $\I$. Denoting by $\Sigma_{\N}^*$ the set of entropy vectors for all $\N$-party quantum states, its topological closure $\qec{\N}$ is a convex cone \cite{1193790} known as the \textit{quantum entropy cone} (QEC). 

For any $\N$, an \textit{outer bound} of $\qec{\N}$ is given by the \textit{polyhedral} cone obtained via the intersection of the half-spaces specified by all instances of subadditivity (SA) and strong subadditivity (SSA) of the von Neumann entropy, and we will denote this cone by $\poly{\N}$ \cite{1193790}. For $\N\leq 3$ it is easy to show that $\qec{\N}=\poly{\N}$ by constructing quantum states that realize entropy vectors belonging to the extreme rays of $\poly{\N}$ \cite{1193790}. On the other hand, for $\N\geq 4$, the constrained inequalities found in \cite{2005:Linden,2011:Cadney} imply that this construction is no longer possible, since there are regions on the boundary of $\poly{\N}$ which are inaccessible to quantum states. This however is not enough to conclude that $\qec{\N}\subset\poly{\N}$ (strictly), since there might be quantum states that approximate the entropy vectors in these regions arbitrarily well.

\section{Quantum marginal independence} 
\label{sec:qmip}

An interesting and well-known problem in quantum mechanics is the \textit{quantum marginal problem} \cite{Schilling:2014qmp}. Given density matrices $\rho_{\I}$ for some subsystems, it asks whether there exists a global density matrix $\rho$ such that all the given $\rho_{\I}$ can be obtained from $\rho$ as marginals. In this context, entropy inequalities provide necessary conditions for the existence of a solution \cite{Carlen_2013}.

The quantum marginal \textit{independence} problem (QMIP) \cite{Hernandez-Cuenca:2019jpv} can be interpreted as a simplified version of the quantum marginal problem, where instead of fixing the marginals for some subsystems, one only demands that certain subsystems are correlated and others are not. Specifically, the QMIP asks the following question: Given an $\N$-party system and a complete specification of the presence of correlation (or conversely, the lack thereof) among the various subsystems, is there a density matrix that satisfies these constraints? As for the case of the marginal problem, entropy inequalities can be used to constrain the space of solutions for the QMIP. However, unlike its parent version, the knowledge of the quantum entropy cone is sufficient to solve the problem completely. To see this, let us formulate the QMIP more precisely.

The $\N$-party \textit{subadditivity cone} (SAC) is defined as the polyhedral cone in entropy space carved out by all instances of SA for that $\N$. Consider a face $\face$ of the SAC and a vector $\vec\ent\in\text{int}(\face)$. Notice that the collection of SA instances which are saturated by $\vec\ent$ is independent from the specific choice of this vector. The saturation of SA is equivalent to the vanishing of the mutual information, which occurs if and only two subsystems are independent. We can then interpret $\text{int}(\face)$ to correspond to a specification of which subsystems are correlated and which are independent, while remaining agnostic about the specific values of the entropies. In its original formulation \cite{Hernandez-Cuenca:2019jpv}, the QMIP asked for which faces does $\text{int}(\face)$ contain at least one entropy vector that can be realized by a quantum state. We now slightly generalize this question by replacing ``realized'' with ``approximated arbitrarily well.'' In this generalized version, the QMIP can then be trivially solved if $\qec{\N}$ is known. 

In this paper we proceed in the opposite direction and attempt to extract information about $\qec{\N}$ from the solution to the QMIP for different values of $\N'\geq\N$. In particular, we will be interested in the extremal version of the QMIP, which we denote as EQMIP, where we only focus on the one-dimensional faces of the SAC, i.e., its extreme rays. Denoting by $\widehat{\er}_{\N}$ the set of all extreme rays of the $\N$-party SAC, the solution $\erq{\N}$ to the EQMIP is then
\begin{equation}
    \erq{\N}:=\widehat{\er}_{\N}\cap\qec{\N} .
\end{equation}
We will now explain how to construct inner bounds for $\qec{\N}$ from the knowledge of $\erq{\N'}$ for some $\N'\geq \N$.

\section{Inner bounds from extremal marginal independence} \label{sec:inner_bounds} 

\subsection{\label{subsec:cg} Subsystem coarse grainings}

Given a density matrix $\rho_{\N'}$, we want to consider a coarse graining of the $\N'$ parties into $\N$ composite ones and to relate the entropy vectors before and after the coarse graining. We will also consider purifications of $\rho_{\N'}$, and allow for coarse grainings of the ``purifying'' party. Defining $\llbracket\N\rrbracket=\{0,1,\ldots,\N\}$, where we added zero to account for the purifier, such a coarse graining can then be specified by a surjective function 
\begin{align}
\label{eq:f}
\begin{split}
    f:\; & \llbracket\N'\rrbracket\rightarrow\; \llbracket\N\rrbracket\\
    & \ell'\; \mapsto\; \ell=f(\ell'),
\end{split}
\end{align}
which specifies, for each party $\ell'\in\llbracket\N'\rrbracket$, which of the coarse gained $\llbracket\N\rrbracket$ parties it belongs to.

Given a density matrix $\rho_{\N'}$ with entropy vector $\vec{\ent}'$, and a coarse graining $f$, the components of the entropy vector of the \textit{same density matrix} after the coarse graining are then given by
\begin{equation}\label{eq:ent-coarsegrain}
    \ent'_{\I}=\ent_{f^{-1}(\I)} ,
\end{equation}
where $f^{-1}(\I)$ is the pre-image of $\I$ \footnote{If $0\in f^{-1}(\I)$, then $f^{-1}(\I)\not\subseteq [\N']$ and $\ent_{f^{-1}(\I)}$ is not a component of $\vec{\ent}(\rho_{\N'})$. However, the entropy of its complement in $\llbracket\N'\rrbracket$ is and is equal to  $\ent_{f^{-1}(\I)}$. This is what the right-hand side of Eq.~\eqref{eq:ent-coarsegrain} refers to implicitly.}.

Notice that in general, this map of entropy vectors from $\Sigma_{\N'}^*$ to $\Sigma_{\N}^*$, which we denote by $\Psi_f$, can formally be extended to vectors in the full $\qec{\N'}$, and can map a vector on the boundary of $\qec{\N'}$ to one strictly in the interior of $\qec{\N}$. For example, any coarse graining to $\N=2$ of the $\N=3$ entropy vector corresponding to the four-party Greenberger-Horne-Zeilinger (GHZ) state gives an entropy vector strictly inside $\qec{2}$ (since no instance of SA is saturated).

\subsection{Inner bounds for $\qec{\N}$}
\label{subsec:bounds}

Using this construction, we can now derive an inner bound for $\qec{\N}$ as follows. Suppose that $\erq{\N'}$ is known for some $\N'\geq \N$. We can then consider all possible coarse graining from $\N'$ to $\N$, mapping all vectors in $\erq{\N'}$ to $\N$-party entropy vectors, and taking the conical hull. Formally, for any given $\N$ and $\N'\geq\N$, we define
\begin{equation}
\label{eq:delta}
    \Delta_{\N}^{\N'}=\text{cone}\,\big\{\Psi_f(\vec{{\sf R}}),\;\forall f\in\Phi\;\;\forall\, \vec{\sf R}\in \erq{\N'} \big\} ,
\end{equation}
where $\Phi$ is the set of all possible functions $f$ from Eq.~\eqref{eq:f}. We then have the following lemma.
\begin{lemma}
\label{lemma:main}
    For any fixed $\N$, and any $i,j\in\mathbb{N}$ with $i<j$, $\Delta_{\N}^{\N+i}\subseteq\Delta_{\N}^{\N+j}$.
\end{lemma}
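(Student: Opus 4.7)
My plan is to prove the inclusion element-wise by factoring each generator of $\Delta_{\N}^{\N+i}$ through a canonical lift to $\N+j$ parties. Since $\Delta_{\N}^{\N+i}$ is the conical hull of the vectors $\Psi_f(\vec{\sf R})$ with $\vec{\sf R}\in\erq{\N+i}$ and $f:\llbracket\N+i\rrbracket\to\llbracket\N\rrbracket$ a surjection, it suffices to show that every such generator lies in $\Delta_{\N}^{\N+j}$. My strategy is to construct, for every $\vec{\sf R}$, a lift $\vec{\sf R}'\in\erq{\N+j}$ and a fixed surjection $g:\llbracket\N+j\rrbracket\to\llbracket\N+i\rrbracket$ such that $\Psi_g(\vec{\sf R}')=\vec{\sf R}$. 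Because the coarse-graining maps satisfy the functoriality $\Psi_f\circ\Psi_g=\Psi_{f\circ g}$ and $f\circ g$ is a surjection $\llbracket\N+j\rrbracket\to\llbracket\N\rrbracket$, this immediately gives $\Psi_f(\vec{\sf R})=\Psi_{f\circ g}(\vec{\sf R}')\in\Delta_{\N}^{\N+j}$ by definition.

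To build the lift, I pick a sequence of $(\N+i)$-party quantum states $\rho_n$ whose entropy vectors approximate $\vec{\sf R}$ and tensor each with $j-i$ trivial pure ancillas, yielding $(\N+j)$-party states $\rho'_n=\rho_n\otimes|0\rangle\!\langle 0|^{\otimes(j-i)}$. The corresponding entropy vectors converge to a vector $\vec{\sf R}'\in\qec{\N+j}$ with components $\ent'_{\I}=\ent_{\I\cap[\N+i]}$ for all $\I\subseteq[\N+j]$. I then define $g$ to be the identity on $\llbracket\N+i\rrbracket$ and to send each extra party $\ell\in\{\N+i+1,\dots,\N+j\}$ to the purifier $0$; the identity $\Psi_g(\vec{\sf R}')=\vec{\sf R}$ follows from a short direct check, using complementarity $\ent'_{\I}=\ent'_{\llbracket\N+j\rrbracket\setminus\I}$ to handle subsets containing $0$.

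The technically non-trivial step is to verify $\vec{\sf R}'\in\erq{\N+j}$, i.e., that the lift lies on an extreme ray of the $(\N+j)$-party SAC. I plan to do this by organizing the SA instances saturated by $\vec{\sf R}'$ into three groups. First, SAs with both disjoint arguments in $[\N+i]$ are saturated precisely when the corresponding $(\N+i)$-party SA is saturated by $\vec{\sf R}$. Second, SAs in which one of the disjoint subsets is a singleton $\{k\}$ for $k$ an extra party are saturated automatically because $\ent'_{\{k\}}=0$ and pure ancillas do not change the entropy of any subsystem upon inclusion. Third, and crucially, the ``purifier-extra'' SAs $\ent'_{\{0\}}+\ent'_{\{k\}}\geq\ent'_{\{0,k\}}$, which via complementarity read $\ent'_{[\N+j]}+\ent'_{\{k\}}\geq\ent'_{[\N+j]\setminus\{k\}}$, are saturated because tensoring with a pure state leaves the purifier entropy unchanged. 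Writing out the linear constraints forced by the second and third groups on an arbitrary $\vec{v}'$ in the carrier face of $\vec{\sf R}'$ yields $v'_{\{k\}}=0$ for every extra $k$ and $v'_{\I}=v'_{\I\cap[\N+i]}$. Combined with the first group and the extremality of $\vec{\sf R}$ in SAC$_{\N+i}$, the restriction $\vec{v}'|_{[\N+i]}$ must then be proportional to $\vec{\sf R}$, so the carrier face of $\vec{\sf R}'$ is one-dimensional.

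The main obstacle is this extremality argument, and specifically the role played by the purifier-extra saturations: if one omits them, each $v'_{\{k\}}$ remains a free non-negative parameter and the carrier face picks up one extra dimension per added ancilla, so $\vec{\sf R}'$ would only sit in the relative interior of a $(j-i+1)$-dimensional face rather than on an extreme ray, breaking the argument that $\Psi_f(\vec{\sf R})$ is itself a generator of $\Delta_{\N}^{\N+j}$. Once the extremality of $\vec{\sf R}'$ is secured, the remaining steps are routine bookkeeping: the functoriality $\Psi_f\circ\Psi_g=\Psi_{f\circ g}$ and the fact that compositions of surjections are surjections turn each generator of $\Delta_{\N}^{\N+i}$ into a generator of $\Delta_{\N}^{\N+j}$, and taking conical hulls yields the full inclusion.
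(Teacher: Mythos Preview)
Your approach is essentially the same as the paper's: lift each $\vec{\sf R}\in\erq{\N+i}$ to $\erq{\N+j}$ by tensoring with $j-i$ pure ancillas, then coarse-grain the extra parties with the purifier and use $\Psi_f\circ\Psi_g=\Psi_{f\circ g}$. The paper's proof is terser---it handles only the exactly-realized case explicitly and defers the general extremality argument to Section~3.4 of \cite{He:2022bmi}---whereas you spell out directly why the lift $\vec{\sf R}'$ remains an extreme ray of the $(\N+j)$-party SAC.
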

\begin{proof}
    It suffices to show that for any $\N'\geq \N$ and $\vec{\sf R}\in \erq{\N}$, there exists some $\vec{\sf R}'\in \erq{\N'}$ and a coarse graining $f$ from $\N'$ to $\N$ such that $\Psi_f(\vec{\sf R}')=\vec{\sf R}$. For the case where $\vec{\sf R}$ is realized by a density matrix $\rho$, we can realize the desired $\vec{\sf R}'$ with $\rho'=\rho\otimes\ket{{\bf 0}}\!\bra{{\bf 0}}_{[\N']\setminus[\N]}$, and choose $f$ such that the additional factors are coarse gained with the purifier for $\rho'$. For the general case, we refer the reader to Section 3.4 of \cite{He:2022bmi}. 
\end{proof}

By \Cref{lemma:main} and the definition of $\Delta_{\N}^{\N'}$ in Eq.~\eqref{eq:delta}, we obtain the following chain of inclusions:
\begin{equation}
\label{eq:chain}
    \Delta_{\N}^{\N}\subseteq \Delta_{\N}^{\N+1}\subseteq\Delta_{\N}^{\N+2}\subseteq\ldots\subseteq\qec{\N}\subseteq\poly{\N} .
\end{equation}
It is then natural to ask how well this sequence approximates $\qec{\N}$, and if there is some ``maximal'' $\widehat{i}$ (which implicitly depends on $\N$) such that
\begin{equation}
\label{eq:stop}
    \Delta_{\N}^{\N+\widehat{i}}=\Delta_{\N}^{\N+i},\quad \forall\, i>\widehat{i}.
\end{equation}

In the following sections, we will answer this question for $\N=3$ and provide a partial answer for $\N=4$. First, however, we need to review a construction that will allow us to determine at least a subset of solutions to the EQMIP for sufficiently large $\N$.

\section{\label{sec:hypergraphs} Hypergraph models}

The \textit{hypergraph models} of entanglement were introduced in \cite{Bao:2020zgx} as a generalization of the graph models used in \cite{Bao:2015bfa} to study entropy inequalities in quantum gravity. An $\N$-party hypergraph model is a simple \footnote{By simple we mean the hypergraph has no loops and no repeated edges.} weighted hypergraph $H=(V,E)$ with vertices $V$, hyperedges $E$ with positive weights, a specification of a subset $\partial V\subseteq V$ of vertices called \emph{boundary vertices}, and a surjective (but not necessarily injective) map $\xi:\partial V\rightarrow \llbracket\N\rrbracket$. 

Given an $\N$-party hypergraph model, one associates to it an entropy vector as follows. For a non-empty subset $\I\subseteq [\N]$, an $\I$-cut is a subset $V_{\I}\subset V$ such that $\partial V\cap V_{\I}=\xi^{-1}(\I)$, where $\xi^{-1}$ denotes the pre-image. The \textit{cost} of any such cut is the sum of the weights of the hyperedges that connect a vertex in $V_{\I}$ to one in $V_{\I}^c$, the complement of $V_{\I}$ in $V$. In other words, given an $\I$-cut $V_{\I}$, and a hyperedge $h$ (thought of as a collection of vertices), the weight of $h$ contributes to the cost of the cut if and only if $h$ contains at least one vertex in $V_{\I}$ and at least one in $V_{\I}^c$. The entropy $\ent_{\I}$ is then defined as the cost of the $\I$-cut with \textit{minimal cost}.

The prescription we just described associates to each $\N$-party hypergraph model a vector in $\mathbb{R}^{\D}$. The collection of all these vectors (at fixed $\N$) is again a convex cone \footnote{A conical combination of two entropy vectors realized by hypergraphs is realized by the hypergraph obtained by the disjoint union of the hypergraphs with the weights appropriately rescaled.}, and we have the following theorem \cite{Walter:2020zvt}.

\begin{thm}
\label{thm:main}
    The entropy cone of hypergraph models is contained in the entropy cone of stabilizer states. 
\end{thm}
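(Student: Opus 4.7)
The plan is to construct, for each hypergraph model $H=(V,E,w,\xi)$, a stabilizer state on $\N$ parties whose entropy vector matches that of $H$ (approximately suffices, since the stabilizer entropy cone is topologically closed). Because stabilizer entropies add under tensor products, it is enough to handle integer weights and then extend to the rationals by rescaling and to the reals by closure. The basic building block is the $k$-party GHZ state $\tfrac{1}{\sqrt{2}}(\ket{0}^{\otimes k}+\ket{1}^{\otimes k})$, a stabilizer state with entropy exactly $1$ on every proper non-empty subset of its $k$ parties. This mirrors the contribution of a unit-weight hyperedge $h$ of cardinality $k$ to the cost of any cut that separates the vertices of $h$ nontrivially.

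The construction then proceeds as follows. For each hyperedge $h$ of integer weight $w_h$, assign $w_h$ copies of $\ket{\mathrm{GHZ}_{|h|}}$ distributed to the vertices in $h$, so that each vertex holds one qubit per GHZ copy for each of its incident hyperedges. To eliminate bulk vertices, I would perform Pauli measurements on the qubits gathered at each bulk vertex (Bell-type for degree-two vertices, GHZ-type in general). Since stabilizer states are closed under Pauli measurements, the resulting boundary state is again a stabilizer state, and the goal is to show that its entropy vector coincides with the hypergraph entropy vector defined via min-cut.

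The crux is this last verification. The easy half is the upper bound $\ent_{\I}\leq\mathrm{cost}(V_{\I})$ for any fixed $\I$-cut $V_{\I}$: hyperedges that do not straddle $V_{\I}$ factor trivially across the $\xi^{-1}(\I)$ versus $\xi^{-1}(\I)^c$ bipartition, so only the straddling hyperedges can contribute to the entropy. The hard half is the reverse inequality $\ent_{\I}\geq\min_{V_{\I}}\mathrm{cost}(V_{\I})$, which requires showing that the construction actually \emph{attains} the minimum over cuts rather than just realizing one specific cut. I would approach this by computing $\ent_{\I}$ as an $\mathbb{F}_2$-rank of the appropriate block of the stabilizer check matrix induced by the measurement outcomes, and then identifying that rank with a combinatorial min-cut in an auxiliary graph encoding the stabilizer generators, via a linear-algebraic max-flow/min-cut duality. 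Establishing this equivalence between stabilizer rank and hypergraph min-cut is, I expect, the principal technical obstacle and the main content of the argument in \cite{Walter:2020zvt}; a possible alternative would be to replace the deterministic Pauli measurements with random stabilizer contractions at the bulk vertices and invoke concentration on the min-cut value, at the cost of working in a large-dimension limit.
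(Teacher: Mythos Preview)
The paper does not actually prove this theorem: it is stated as a result of \cite{Walter:2020zvt} and used as a black box, so there is no in-paper proof to compare your proposal against.

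As for the proposal itself, the outline is broadly in the spirit of \cite{Walter:2020zvt}, and in fact the alternative you mention at the very end --- random stabilizer tensors contracted at bulk vertices, with concentration onto the min-cut value in a large-bond-dimension limit --- is essentially what that reference does. Your primary route, by contrast, has a real gap: a \emph{fixed} choice of Pauli/Bell measurements at the bulk vertices will not in general produce a state whose entropy equals the min-cut for \emph{every} boundary bipartition $\I$ simultaneously. Deterministic contraction can easily undershoot (e.g., Bell measurements on two qubits of two separate Bell pairs can leave the remaining qubits in a product state rather than a Bell pair), so the ``hard half'' $\ent_{\I}\geq \min_{V_{\I}}\mathrm{cost}(V_{\I})$ genuinely fails for specific measurement choices. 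This is precisely why randomness and the large-dimension limit are needed; your suggestion to identify an $\mathbb{F}_2$-rank with a combinatorial min-cut would have to be carried out for a \emph{generic} stabilizer contraction, not a fixed one, and that is where the actual work lies. Also, your ``easy half'' is not as immediate as you suggest: after contracting at bulk vertices the state is no longer a tensor product over hyperedges, so the bound $\ent_{\I}\leq \mathrm{cost}(V_{\I})$ for an arbitrary cut requires an argument (it follows from viewing the contraction as an isometry-plus-projection across the cut, but this should be said).
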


While in principle hypergraph models may not be sufficient to solve the EQMIP completely at arbitrary $\N$ (see \S\ref{sec:discussion}), they at least provide a partial solution that will be sufficient for our purposes.

\begin{figure}[tb]
    \centering
    \begin{tikzpicture}
    \node[] at (0,0) {\includegraphics{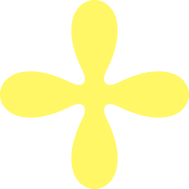}};
    \draw[very thick,blue]  (0,1) -- (-1,2) ;
    \draw[very thick,blue]  (0,1) -- (1,2);
    \draw[very thick,blue]  (1,0) -- (2,1);
    \draw[very thick,blue]  (1,0) -- (2,-1);
    \draw[very thick,blue]  (0,-1) -- (1,-2);
    \draw[very thick,blue]  (0,-1) -- (-1,-2);
    \filldraw[red] (0,1) circle (2.5pt);
    \filldraw[red] (1,0) circle (2.5pt);
    \filldraw[red] (0,-1) circle (2.5pt);
    \filldraw (-1,2) circle (2.5pt); 
    \filldraw (1,2) circle (2.5pt); 
    \filldraw (-1,0) circle (2.5pt); 
    \filldraw (2,1) circle (2.5pt); 
    \filldraw (2,-1) circle (2.5pt); 
    \filldraw (1,-2) circle (2.5pt); 
    \filldraw (-1,-2) circle (2.5pt); 
    \draw[orange] (0,0) node[]{{\footnotesize $2$}};
    \draw[] (-1.4,0) node{{\footnotesize $0$}};
    \draw[] (-1.3,2.3) node{{\footnotesize $1$}};
    \draw[] (1.3,2.3) node{{\footnotesize $2$}};
    \draw[] (2.3,1.3) node{{\footnotesize $3$}};
    \draw[] (2.3,-1.3) node{{\footnotesize $4$}};
    \draw[] (1.3,-2.3) node{{\footnotesize $5$}};
    \draw[] (-1.3,-2.3) node{{\footnotesize $6$}};
    \end{tikzpicture}
    \caption{A hypergraph giving an element of $\erq{6}$. The hyperedge (yellow blobs) has weight 2, and edges (blue lines) have weight 1.}
    \label{fig:hg1}
\end{figure}

\begin{figure}[tb]
    \centering
    \begin{tikzpicture}
    \node[] at (0,0) {\includegraphics{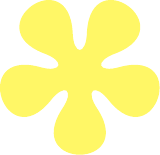}};
    \draw[very thick,blue]  (270:1) -- (290:2);
    \draw[very thick,blue]  (270:1) -- (250:2);
    \draw[very thick,blue]  (198:1) -- (218:2);
    \draw[very thick,blue]  (198:1) -- (178:2);
    \draw[very thick,blue]  (126:1) -- (146:2);
    \draw[very thick,blue]  (126:1) -- (106:2);
    \draw[very thick,blue]  (54:1) -- (74:2);
    \draw[very thick,blue]  (54:1) -- (34:2);
    \draw[very thick,blue]  (-18:1) -- (2:2);
    \draw[very thick,blue]  (-18:1) -- (-38:2);
    \filldraw[red] (270:1) circle (2.5pt);
    \filldraw[red] (198:1) circle (2.5pt);
    \filldraw[red] (126:1) circle (2.5pt);
    \filldraw[red] (54:1) circle (2.5pt);
    \filldraw[red] (-18:1) circle (2.5pt);
    \filldraw (250:2) circle (2.5pt); 
    \filldraw (290:2) circle (2.5pt); 
    \filldraw (218:2) circle (2.5pt);
    \filldraw (178:2) circle (2.5pt);
    \filldraw (146:2) circle (2.5pt);
    \filldraw (106:2) circle (2.5pt);
    \filldraw (74:2) circle (2.5pt);
    \filldraw (34:2) circle (2.5pt);
    \filldraw (2:2) circle (2.5pt);
    \filldraw (-38:2) circle (2.5pt);
    \draw[orange] (0,0) node[]{{\footnotesize $2$}};
    \draw[] (290:2.3) node{{\footnotesize $1$}};
    \draw[] (250:2.3) node{{\footnotesize $2$}};
    \draw[] (218:2.3) node{{\footnotesize $3$}};
    \draw[] (178:2.3) node{{\footnotesize $4$}};
    \draw[] (146:2.3) node{{\footnotesize $5$}};
    \draw[] (106:2.3) node{{\footnotesize $6$}};
    \draw[] (74:2.3) node{{\footnotesize $7$}};
    \draw[] (34:2.3) node{{\footnotesize $8$}};
    \draw[] (2:2.3) node{{\footnotesize $9$}};
    \draw[] (-38:2.3) node{{\footnotesize $0$}};
    \end{tikzpicture}
    \caption{A hypergraph giving an element of $\erq{9}$. The hyperedge (yellow blobs) has weight 2, and edges (blue lines) have weight 1.}
    \label{fig:hg2}
\end{figure}

\begin{figure}[tb]
    \centering
    \begin{tikzpicture}
    \node[] at (0,0) {\includegraphics{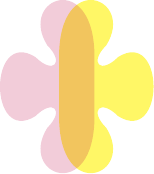}};
    \draw[very thick,blue]  (0,1) -- (-1,2) ;
    \draw[very thick,blue]  (0,1) -- (1,2);
    \draw[very thick,blue]  (1,0) -- (2,1);
    \draw[very thick,blue]  (1,0) -- (2,-1);
    \draw[very thick,blue]  (-1,0) -- (-2,1);
    \draw[very thick,blue]  (-1,0) -- (-2,-1);
    \draw[very thick,blue]  (0,-1) -- (0,-2.414);
    \draw[very thick,blue]  (0,0) -- (-1.5,-1.5);
    \draw[very thick,blue]  (0,0) -- (1.5,-1.5);
    \filldraw[red] (0,1) circle (2.5pt);
    \filldraw[red] (1,0) circle (2.5pt);
    \filldraw[red] (0,-1) circle (2.5pt);
    \filldraw[red] (-1,0) circle (2.5pt); 
    \filldraw[red] (0,0) circle (2.5pt); 
    \filldraw (-1,2) circle (2.5pt); 
    \filldraw (1,2) circle (2.5pt);  
    \filldraw (2,1) circle (2.5pt); 
    \filldraw (2,-1) circle (2.5pt); 
    \filldraw (-2,1) circle (2.5pt); 
    \filldraw (-2,-1) circle (2.5pt); 
    \filldraw (-1.5,-1.5) circle (2.5pt);
    \filldraw (1.5,-1.5) circle (2.5pt);
    \filldraw (0,-2.414) circle (2.5pt);  
    \draw[purple] (-0.6,0) node[]{{\footnotesize $2$}};
    \draw[orange] (0.6,0) node[]{{\footnotesize $2$}};
    \draw[blue] (0.01,-1.9) node[right]{{\footnotesize $2$}};
    \draw[blue] (-1.1,1.6) node[right]{{\footnotesize $2$}};
    \draw[blue] (0.7,1.6) node[right]{{\footnotesize $2$}};
    \draw[blue] (-1.5,-1) node[right]{{\footnotesize $2$}};
    \draw[blue] (1.1,-1) node[right]{{\footnotesize $2$}};
    \draw[] (-1.3,2.3) node{{\footnotesize $3$}};
    \draw[] (1.3,2.3) node{{\footnotesize $4$}};
    \draw[] (2.3,1.3) node{{\footnotesize $5$}};
    \draw[] (2.3,-1.3) node{{\footnotesize $6$}};
    \draw[] (-2.3,1.3) node{{\footnotesize $2$}};
    \draw[] (-2.3,-1.3) node{{\footnotesize $1$}};
    \draw[] (0,-2.714) node{{\footnotesize $0$}};
    \draw[] (-1.8,-1.8) node{{\footnotesize $7$}};
    \draw[] (1.8,-1.8) node{{\footnotesize $8$}};
    \end{tikzpicture}
    \caption{A hypergraph giving an element of $\erq{8}$. The two hyperedges (yellow and purple blobs) have weight 2, and edges (blue lines) have weights either 1 (when unspecified) or 2.}
    \label{fig:hg3}
\end{figure}

\section{Constructions for small $\N$}
\label{sec:examples}

Having reviewed the necessary tools from hypergraph models, we now use this technology to construct inner bounds for $\qec{\N}$ for $\N\leq 4$. The logic will be as follows: we will construct hypergraphs that realize certain extreme rays of the SAC which, by \Cref{thm:main}, are automatically solutions to the EQMIP. Even without a full solution to the EQMIP, we will then show via particular coarse grainings that we can obtain entropy vectors for a smaller number of parties, which give powerful inner bounds.

\subsection{\label{subsec:n2} Bounds for $\N=2$}

This case is trivial, but we include it for completeness. The outer bound $\poly{2}$ to $\qec{2}$ is the SAC, and its extreme rays are realized by Bell pairs. We then have 
\begin{equation}
    \Delta_2^2=\qec{2}=\poly{2} .
\end{equation}
%

\subsection{\label{subsec:n3} Bounds for $\N=3$}

As we mentioned in \S\ref{sec:cones}, it was already shown in \cite{1193790} that $\qec{3}=\poly{3}$. This equivalence follows from the fact that the extreme rays of $\poly{3}$ contain entropy vectors realized by Bell pairs, the four-party GHZ state, and the four-party absolutely maximally entangled state (also known as the four-party perfect tensor) \cite{Helwig:2013ckb}. With the only exception of the four-party GHZ, all these states also generate extreme rays of the three-party SAC and are hence elements of $\erq{3}$. Their conical hull is $\Delta_3^3$, which coincides with the three-party holographic entropy cone \cite{Bao:2015bfa}. 

To show that the sequence in Eq.~\eqref{eq:chain} converges to $\qec{3}$, we need to consider $\Delta_3^{3+i}$ for $i\geq 1$. The sets $\erq{4}$ and $\erq{5}$ were already found in \cite{Hernandez-Cuenca:2019jpv}, and it is straightforward to verify that they imply $\Delta_3^3=\Delta_3^4=\Delta_3^5$. Hence, we need $i\geq 3$. It turns out that $i=3$ is sufficient, since one can verify that the entropy vector associated to the hypergraph in \Cref{fig:hg1} is in $\erq{6}$, and under the coarse graining
\begin{equation}
\label{eq:N6toN3cg}
    \{0,1,2,3,4,5,6\} \rightarrow \{0,1,1,2,2,3,3\} ,
\end{equation}
results in a hypergraph model of the GHZ state. In summary, we have
\begin{equation}
\label{eq:sequence3}
    \Delta_3^3=\Delta_3^4=\Delta_3^5\subset\Delta_3^6=\qec{3},
\end{equation}
which by Eq.~\eqref{eq:chain} also implies $\Delta_3^{6+i}=\Delta_3^6$ for all $i\in\mathbb{N}$.

\subsection{\label{subsec:n4} Bounds for $\N=4$}

As mentioned in \S\ref{sec:cones}, the QEC for $\N=4$ is not known. However, we will show that our construction, even with a relatively small value of $\N'$, suffices to derive an inner bound that coincides with the subset of $\qec{4}$ of entropy vectors realized by stabilizer states. This subset, which we denote by $\overline{\Lambda}_{4}^*$, was shown in \cite{Linden:2013kal} to coincide with its outer bound $\Lambda_{4}$, defined as the polyhedral cone specified by SA, SSA, and Ingleton's inequality \cite{zbMATH03351601}. Our goal is to show that there is some $i\in\mathbb{N}$ such that 
\begin{equation}
    \Delta_4^{4+i}=\overline{\Lambda}_{4}^*.
\end{equation}

To see this, we can use a result from \cite{Bao:2020zgx}, which constructed hypergraph models realizing the entropy vectors that generate the extreme rays of $\Lambda_4$. It is then enough to show that each of these models (seven in total) can be obtained 
from coarse graining another hypergraph involving $\N'$ parties (for some $\N'\geq 4$), whose corresponding entropy vector is an extreme ray of the $\N'$-party SAC. Four out of these seven cases are trivial, since they correspond to elements of $\erq{4}$. For the remaining three non-trivial cases (which require $i>0$), we construct the hypergraphs shown in Figures~\ref{fig:hg1}--\ref{fig:hg3}. As one can verify, they are all associated to extreme rays of the SAC for their corresponding number of parties. By an appropriate choice of coarse grainings, we can reduce these hypergraphs to the ones labeled by 4, 6 and 7 in Figure~4 of \cite{Bao:2020zgx} (up to a trivial permutation of the parties). Specifically, these coarse grainings are
\begin{align}
\begin{split}
      \{0,1,2,3,4,5,6\} &\rightarrow \{0,1,1,2,2,3,4\}     \nonumber\\
      \{0,1,2,3,4,5,6,7,8,9\} &\rightarrow \{0,1,1,2,2,3,3,4,4,0\}      \nonumber\\
      \{0,1,2,3,4,5,6,7,8\} &\rightarrow \{0,1,1,2,2,3,3,4,4\}  .
\end{split}
\end{align} 
We then have
\begin{equation}
\label{eq:chain4}                    \Delta_4^4=\Delta_4^5\subset\Delta_4^6\subset\overline{\Lambda}_{4}^*\subseteq\Delta_4^9\subseteq\qec{4}\subseteq\poly{4},
\end{equation}
where the first equality follows trivially from the explicit knowledge of $\erq{4}$ and $\erq{5}$ \cite{Hernandez-Cuenca:2019jpv}, and the strict inclusion of $\Delta_4^6$ in $\overline{\Lambda}_{4}^*$ follows from the results of \cite{He:2022wip} regarding $\erq{6}$. It is possible that $\erq{8}$, or even $\erq{7}$, might be enough to obtain $\overline{\Lambda}_{4}^*$, and we leave this for future work.

\section{\label{sec:discussion} Discussion}

We conclude by commenting on a few open questions regarding the strength of our bound, which we believe to be particularly interesting. 

For the bounds that we have explicitly constructed in this paper, we relied on the hypergraph models of \cite{Bao:2020zgx} and the result of \cite{Walter:2020zvt}. However, it was shown in \cite{Bao:2020mqq} that for $\N\geq 5$ the inclusion of the entropy cone of hypergraphs in the cone of stabilizer states is strict. This implies that hypergraph models might not be sufficient to derive the most stringent possible bounds. On the other hand, we cannot rule out the possibility that hypergraph models do suffice to solve the EQMIP for arbitrary $\N$, in which case our bounds would not converge to the full QEC for any $\N\geq 4$. To resolve this, one direction is to look for elements in  $\erq{\N}$ that can be realized by stabilizer states but violate the ``hypergraph'' inequality found in \cite{Bao:2020mqq} at $\N=5$.  

A similar limitation might also affect stabilizer states, since it is possible that for some value of $\N$, the solution to the EQMIP requires non-stabilizer quantum states. This question is related to the most stringent bound that can be realized at $\N=4$. Indeed, any $\Delta_4^{4+i}$ such that $\overline{\Lambda}_{4}^*\subset\Delta_4^{4+i}$ requires an element of $\erq{4+i}$ that cannot be realized by a stabilizer state, since it has to violate Ingleton's inequality. Such violations have already been investigated for classical probability distributions \cite{Mao2009ViolatingTI}, and another question for the future is to determine if for some $\N$ there exist extreme rays of the SAC that violate Ingleton's inequality while being realizable by quantum states (at least approximately). 

More generally, deep structural properties of the QEC might be extractable by exploring the behavior of the sequence of approximations given in Eq.~\eqref{eq:chain}. 
As we already mentioned at the end of \S\ref{sec:inner_bounds}, one possibility (for any given $\N$) is that there exists some $\widehat{i}$ such that the sequence does not change beyond this point (see Eq.~\eqref{eq:stop}). This is the case for $\N=3$, where we have shown that the approximation is exact, and the sequence already converges to the full $\qec{3}$ at $i=3$ (see Eq.~\eqref{eq:sequence3}).  However, in principle this behavior is possible even if the approximation is not exact. For example, this would be the case for $\N=4$ if stabilizer states were sufficient to solve the EQMIP. Another possibility is that (at least for some $\N$) the sequence continues to grow indefinitely and no such $\widehat{i}$ exists. In that case, it would be important to understand if it converges, and how its limit is related to $\qec{\N}$.

These possibilities are also related to the question about the polyhedrality of the QEC. For $\N=3$, the exact approximation of $\qec{3}$ at finite $\widehat{i}$ is possible only because $\qec{3}$ is a polyhedral cone, since any $\Delta_{\N}^{\N'}$ is polyhedral by construction. For $\N>3$, it is currently not known whether $\qec{\N}$ is polyhedral. If it is not, as is the case for the subset of $\qec{\N}$ corresponding to classical probability distributions \cite{2007:Matus}, the sequence in Eq.~\eqref{eq:chain} might still converge, but only at infinite $i$.

Finally, let us comment on the role of the outer bound $\poly{\N}$, and in particular SSA. As we mentioned in the introduction, to our knowledge there is currently no strong evidence against the possibility that the QEC also coincides with $\poly{\N}$ for $\N>3$. To explore this possibility, a natural way to proceed is to examine the gap between our inner bound and $\poly{\N}$. In general, an extreme ray of $\poly{\N}$ is the intersection of facets corresponding to multiple instances of both SA and SSA, and it may not be an extreme ray of the $\N$-party SAC. Thus, one may wonder if such an extreme ray can be obtained from coarse graining an extreme ray of the SAC involving more parties and realizable (at least approximately) in quantum mechanics. It is intriguing that an example of this is already present at $\N=3$. In this case, one of the extreme rays of $\poly{3}$, namely the one corresponding to the four-party GHZ state described in \S\ref{subsec:n3}, does not saturate any instance of SA, and nevertheless, it can be obtained from our construction via the coarse graining in Eq.~\eqref{eq:N6toN3cg} of the element of $\erq{6}$ shown in \Cref{fig:hg1}. This suggests that further explorations of the gap between our inner bound and the outer bound $\poly{\N}$ might provide powerful insights on the structure of the quantum entropy cone.

\vspace{0.5cm}

\noindent {\it Acknowledgments:} T.H. is supported by the Heising-Simons Foundation ``Observational Signatures of Quantum Gravity'' collaboration grant 2021-2817, the U.S. Department of Energy grant DE-SC0011632, and the Walter Burke Institute for Theoretical Physics. V.H. has been supported in part by the U.S. Department of Energy grant DE-SC0009999 and by funds from the University of California. M.R is supported by funds from the University of California.

\bibliography{entropy-cone-bib} 

\providecommand{\noopsort}[1]{}\providecommand{\singleletter}[1]{#1}%
\begin{thebibliography}{32}%
\makeatletter
\providecommand \@ifxundefined [1]{%
 \@ifx{#1\undefined}
}%
\providecommand \@ifnum [1]{%
 \ifnum #1\expandafter \@firstoftwo
 \else \expandafter \@secondoftwo
 \fi
}%
\providecommand \@ifx [1]{%
 \ifx #1\expandafter \@firstoftwo
 \else \expandafter \@secondoftwo
 \fi
}%
\providecommand \natexlab [1]{#1}%
\providecommand \enquote  [1]{``#1''}%
\providecommand \bibnamefont  [1]{#1}%
\providecommand \bibfnamefont [1]{#1}%
\providecommand \citenamefont [1]{#1}%
\providecommand \href@noop [0]{\@secondoftwo}%
\providecommand \href [0]{\begingroup \@sanitize@url \@href}%
\providecommand \@href[1]{\@@startlink{#1}\@@href}%
\providecommand \@@href[1]{\endgroup#1\@@endlink}%
\providecommand \@sanitize@url [0]{\catcode `\\12\catcode `\$12\catcode
  `\&12\catcode `\#12\catcode `\^12\catcode `\_12\catcode `\%12\relax}%
\providecommand \@@startlink[1]{}%
\providecommand \@@endlink[0]{}%
\providecommand \url  [0]{\begingroup\@sanitize@url \@url }%
\providecommand \@url [1]{\endgroup\@href {#1}{\urlprefix }}%
\providecommand \urlprefix  [0]{URL }%
\providecommand \Eprint [0]{\href }%
\providecommand \doibase [0]{http://dx.doi.org/}%
\providecommand \selectlanguage [0]{\@gobble}%
\providecommand \bibinfo  [0]{\@secondoftwo}%
\providecommand \bibfield  [0]{\@secondoftwo}%
\providecommand \translation [1]{[#1]}%
\providecommand \BibitemOpen [0]{}%
\providecommand \bibitemStop [0]{}%
\providecommand \bibitemNoStop [0]{.\EOS\space}%
\providecommand \EOS [0]{\spacefactor3000\relax}%
\providecommand \BibitemShut  [1]{\csname bibitem#1\endcsname}%
\let\auto@bib@innerbib\@empty
\bibitem [{\citenamefont {Lieb}\ and\ \citenamefont
  {Ruskai}(1973{\natexlab{a}})}]{1973ssa:Lieb}%
  \BibitemOpen
  \bibfield  {author} {\bibinfo {author} {\bibfnamefont {Elliott~H.}\
  \bibnamefont {Lieb}}\ and\ \bibinfo {author} {\bibfnamefont {Mary~Beth}\
  \bibnamefont {Ruskai}},\ }\bibfield  {title} {\enquote {\bibinfo {title} {A
  fundamental property of quantum-mechanical entropy},}\ }\href {\doibase
  10.1103/PhysRevLett.30.434} {\bibfield  {journal} {\bibinfo  {journal} {Phys.
  Rev. Lett.}\ }\textbf {\bibinfo {volume} {30}},\ \bibinfo {pages} {434--436}
  (\bibinfo {year} {1973}{\natexlab{a}})}\BibitemShut {NoStop}%
\bibitem [{\citenamefont {Lieb}\ and\ \citenamefont
  {Ruskai}(1973{\natexlab{b}})}]{Lieb:1973cp}%
  \BibitemOpen
  \bibfield  {author} {\bibinfo {author} {\bibfnamefont {E.~H.}\ \bibnamefont
  {Lieb}}\ and\ \bibinfo {author} {\bibfnamefont {M.~B.}\ \bibnamefont
  {Ruskai}},\ }\bibfield  {title} {\enquote {\bibinfo {title} {{Proof of the
  strong subadditivity of quantum-mechanical entropy}},}\ }\href {\doibase
  10.1063/1.1666274} {\bibfield  {journal} {\bibinfo  {journal} {J. Math.
  Phys.}\ }\textbf {\bibinfo {volume} {14}},\ \bibinfo {pages} {1938--1941}
  (\bibinfo {year} {1973}{\natexlab{b}})}\BibitemShut {NoStop}%
\bibitem [{\citenamefont {Linden}\ and\ \citenamefont
  {Winter}(2005)}]{2005:Linden}%
  \BibitemOpen
  \bibfield  {author} {\bibinfo {author} {\bibfnamefont {Noah}\ \bibnamefont
  {Linden}}\ and\ \bibinfo {author} {\bibfnamefont {Andreas}\ \bibnamefont
  {Winter}},\ }\bibfield  {title} {\enquote {\bibinfo {title} {A new inequality
  for the von neumann entropy},}\ }\href {\doibase 10.1007/s00220-005-1361-2}
  {\bibfield  {journal} {\bibinfo  {journal} {Communications in Mathematical
  Physics}\ }\textbf {\bibinfo {volume} {259}},\ \bibinfo {pages} {129--138}
  (\bibinfo {year} {2005})}\BibitemShut {NoStop}%
\bibitem [{\citenamefont {Cadney}\ \emph {et~al.}(2012)\citenamefont {Cadney},
  \citenamefont {Linden},\ and\ \citenamefont {Winter}}]{2011:Cadney}%
  \BibitemOpen
  \bibfield  {author} {\bibinfo {author} {\bibfnamefont {Josh}\ \bibnamefont
  {Cadney}}, \bibinfo {author} {\bibfnamefont {Noah}\ \bibnamefont {Linden}}, \
  and\ \bibinfo {author} {\bibfnamefont {Andreas}\ \bibnamefont {Winter}},\
  }\bibfield  {title} {\enquote {\bibinfo {title} {Infinitely many constrained
  inequalities for the von neumann entropy},}\ }\href {\doibase
  10.1109/TIT.2012.2185036} {\bibfield  {journal} {\bibinfo  {journal} {IEEE
  Transactions on Information Theory}\ }\textbf {\bibinfo {volume} {58}},\
  \bibinfo {pages} {3657--3663} (\bibinfo {year} {2012})}\BibitemShut {NoStop}%
\bibitem [{\citenamefont {Christandl}\ \emph {et~al.}(2023)\citenamefont
  {Christandl}, \citenamefont {Durhuus},\ and\ \citenamefont
  {Wolff}}]{Christandl_2023}%
  \BibitemOpen
  \bibfield  {author} {\bibinfo {author} {\bibfnamefont {Matthias}\
  \bibnamefont {Christandl}}, \bibinfo {author} {\bibfnamefont {Bergfinnur}\
  \bibnamefont {Durhuus}}, \ and\ \bibinfo {author} {\bibfnamefont
  {Lasse~Harboe}\ \bibnamefont {Wolff}},\ }\bibfield  {title} {\enquote
  {\bibinfo {title} {Tip of the quantum entropy cone},}\ }\href {\doibase
  10.1103/physrevlett.131.240201} {\bibfield  {journal} {\bibinfo  {journal}
  {Physical Review Letters}\ }\textbf {\bibinfo {volume} {131}} (\bibinfo
  {year} {2023}),\ 10.1103/physrevlett.131.240201}\BibitemShut {NoStop}%
\bibitem [{\citenamefont {Zhang}\ and\ \citenamefont
  {Yeung}(1998)}]{1998:Zhang}%
  \BibitemOpen
  \bibfield  {author} {\bibinfo {author} {\bibfnamefont {Zhen}\ \bibnamefont
  {Zhang}}\ and\ \bibinfo {author} {\bibfnamefont {R.W.}\ \bibnamefont
  {Yeung}},\ }\bibfield  {title} {\enquote {\bibinfo {title} {On
  characterization of entropy function via information inequalities},}\ }\href
  {\doibase 10.1109/18.681320} {\bibfield  {journal} {\bibinfo  {journal} {IEEE
  Transactions on Information Theory}\ }\textbf {\bibinfo {volume} {44}},\
  \bibinfo {pages} {1440--1452} (\bibinfo {year} {1998})}\BibitemShut {NoStop}%
\bibitem [{\citenamefont {Makarychev}\ \emph {et~al.}(2002)\citenamefont
  {Makarychev}, \citenamefont {Makarychev}, \citenamefont {Romashchenko},\ and\
  \citenamefont {Vereshchagin}}]{2002:Makarychev}%
  \BibitemOpen
  \bibfield  {author} {\bibinfo {author} {\bibfnamefont {Konstantin}\
  \bibnamefont {Makarychev}}, \bibinfo {author} {\bibfnamefont {Yury}\
  \bibnamefont {Makarychev}}, \bibinfo {author} {\bibfnamefont {Andrei~E.}\
  \bibnamefont {Romashchenko}}, \ and\ \bibinfo {author} {\bibfnamefont
  {Nikolai~K.}\ \bibnamefont {Vereshchagin}},\ }\bibfield  {title} {\enquote
  {\bibinfo {title} {A new class of non-shannon-type inequalities for
  entropies},}\ }\href {https://api.semanticscholar.org/CorpusID:734377}
  {\bibfield  {journal} {\bibinfo  {journal} {Commun. Inf. Syst.}\ }\textbf
  {\bibinfo {volume} {2}},\ \bibinfo {pages} {147--166} (\bibinfo {year}
  {2002})}\BibitemShut {NoStop}%
\bibitem [{\citenamefont {Zhang}(2003)}]{2003:Zhang}%
  \BibitemOpen
  \bibfield  {author} {\bibinfo {author} {\bibfnamefont {Zhen}\ \bibnamefont
  {Zhang}},\ }\bibfield  {title} {\enquote {\bibinfo {title} {On a new
  non-shannon type information inequality},}\ }\href
  {https://api.semanticscholar.org/CorpusID:124371850} {\bibfield  {journal}
  {\bibinfo  {journal} {Commun. Inf. Syst.}\ }\textbf {\bibinfo {volume} {3}},\
  \bibinfo {pages} {47--60} (\bibinfo {year} {2003})}\BibitemShut {NoStop}%
\bibitem [{\citenamefont {Mat{\'u}s}(2007)}]{2007:Matus}%
  \BibitemOpen
  \bibfield  {author} {\bibinfo {author} {\bibfnamefont {Frantisek}\
  \bibnamefont {Mat{\'u}s}},\ }\bibfield  {title} {\enquote {\bibinfo {title}
  {Infinitely many information inequalities},}\ }\href
  {https://api.semanticscholar.org/CorpusID:14959910} {\bibfield  {journal}
  {\bibinfo  {journal} {2007 IEEE International Symposium on Information
  Theory}\ ,\ \bibinfo {pages} {41--44}} (\bibinfo {year} {2007})}\BibitemShut
  {NoStop}%
\bibitem [{\citenamefont {Hern\'andez-Cuenca}\ \emph
  {et~al.}(2019)\citenamefont {Hern\'andez-Cuenca}, \citenamefont {Hubeny},
  \citenamefont {Rangamani},\ and\ \citenamefont
  {Rota}}]{Hernandez-Cuenca:2019jpv}%
  \BibitemOpen
  \bibfield  {author} {\bibinfo {author} {\bibfnamefont {Sergio}\ \bibnamefont
  {Hern\'andez-Cuenca}}, \bibinfo {author} {\bibfnamefont {Veronika~E.}\
  \bibnamefont {Hubeny}}, \bibinfo {author} {\bibfnamefont {Mukund}\
  \bibnamefont {Rangamani}}, \ and\ \bibinfo {author} {\bibfnamefont
  {Massimiliano}\ \bibnamefont {Rota}},\ }\bibfield  {title} {\enquote
  {\bibinfo {title} {{The quantum marginal independence problem}},}\
  }\href@noop {} {\  (\bibinfo {year} {2019})},\ \Eprint
  {http://arxiv.org/abs/1912.01041} {arXiv:1912.01041 [quant-ph]} \BibitemShut
  {NoStop}%
\bibitem [{\citenamefont {Hubeny}\ \emph {et~al.}(2018)\citenamefont {Hubeny},
  \citenamefont {Rangamani},\ and\ \citenamefont {Rota}}]{Hubeny:2018trv}%
  \BibitemOpen
  \bibfield  {author} {\bibinfo {author} {\bibfnamefont {Veronika~E.}\
  \bibnamefont {Hubeny}}, \bibinfo {author} {\bibfnamefont {Mukund}\
  \bibnamefont {Rangamani}}, \ and\ \bibinfo {author} {\bibfnamefont
  {Massimiliano}\ \bibnamefont {Rota}},\ }\bibfield  {title} {\enquote
  {\bibinfo {title} {{Holographic entropy relations}},}\ }\href {\doibase
  10.1002/prop.201800067} {\bibfield  {journal} {\bibinfo  {journal} {Fortsch.
  Phys.}\ }\textbf {\bibinfo {volume} {66}},\ \bibinfo {pages} {1800067}
  (\bibinfo {year} {2018})},\ \Eprint {http://arxiv.org/abs/1808.07871}
  {arXiv:1808.07871 [hep-th]} \BibitemShut {NoStop}%
\bibitem [{\citenamefont {Hubeny}\ \emph {et~al.}(2019)\citenamefont {Hubeny},
  \citenamefont {Rangamani},\ and\ \citenamefont {Rota}}]{Hubeny:2018ijt}%
  \BibitemOpen
  \bibfield  {author} {\bibinfo {author} {\bibfnamefont {Veronika~E.}\
  \bibnamefont {Hubeny}}, \bibinfo {author} {\bibfnamefont {Mukund}\
  \bibnamefont {Rangamani}}, \ and\ \bibinfo {author} {\bibfnamefont
  {Massimiliano}\ \bibnamefont {Rota}},\ }\bibfield  {title} {\enquote
  {\bibinfo {title} {{The holographic entropy arrangement}},}\ }\href {\doibase
  10.1002/prop.201900011} {\bibfield  {journal} {\bibinfo  {journal} {Fortsch.
  Phys.}\ }\textbf {\bibinfo {volume} {67}},\ \bibinfo {pages} {1900011}
  (\bibinfo {year} {2019})},\ \Eprint {http://arxiv.org/abs/1812.08133}
  {arXiv:1812.08133 [hep-th]} \BibitemShut {NoStop}%
\bibitem [{\citenamefont {Hern\'andez-Cuenca}\ \emph
  {et~al.}(2022)\citenamefont {Hern\'andez-Cuenca}, \citenamefont {Hubeny},\
  and\ \citenamefont {Rota}}]{Hernandez-Cuenca:2022pst}%
  \BibitemOpen
  \bibfield  {author} {\bibinfo {author} {\bibfnamefont {Sergio}\ \bibnamefont
  {Hern\'andez-Cuenca}}, \bibinfo {author} {\bibfnamefont {Veronika~E.}\
  \bibnamefont {Hubeny}}, \ and\ \bibinfo {author} {\bibfnamefont
  {Massimiliano}\ \bibnamefont {Rota}},\ }\bibfield  {title} {\enquote
  {\bibinfo {title} {{The holographic entropy cone from marginal
  independence}},}\ }\href {\doibase 10.1007/JHEP09(2022)190} {\bibfield
  {journal} {\bibinfo  {journal} {JHEP}\ }\textbf {\bibinfo {volume} {09}},\
  \bibinfo {pages} {190} (\bibinfo {year} {2022})},\ \Eprint
  {http://arxiv.org/abs/2204.00075} {arXiv:2204.00075 [hep-th]} \BibitemShut
  {NoStop}%
\bibitem [{\citenamefont {Ryu}\ and\ \citenamefont
  {Takayanagi}(2006)}]{Ryu:2006bv}%
  \BibitemOpen
  \bibfield  {author} {\bibinfo {author} {\bibfnamefont {Shinsei}\ \bibnamefont
  {Ryu}}\ and\ \bibinfo {author} {\bibfnamefont {Tadashi}\ \bibnamefont
  {Takayanagi}},\ }\bibfield  {title} {\enquote {\bibinfo {title} {{Holographic
  derivation of entanglement entropy from AdS/CFT}},}\ }\href {\doibase
  10.1103/PhysRevLett.96.181602} {\bibfield  {journal} {\bibinfo  {journal}
  {Phys. Rev. Lett.}\ }\textbf {\bibinfo {volume} {96}},\ \bibinfo {pages}
  {181602} (\bibinfo {year} {2006})},\ \Eprint
  {http://arxiv.org/abs/hep-th/0603001} {arXiv:hep-th/0603001 [hep-th]}
  \BibitemShut {NoStop}%
\bibitem [{\citenamefont {Hayden}\ \emph {et~al.}(2013)\citenamefont {Hayden},
  \citenamefont {Headrick},\ and\ \citenamefont {Maloney}}]{Hayden:2011ag}%
  \BibitemOpen
  \bibfield  {author} {\bibinfo {author} {\bibfnamefont {Patrick}\ \bibnamefont
  {Hayden}}, \bibinfo {author} {\bibfnamefont {Matthew}\ \bibnamefont
  {Headrick}}, \ and\ \bibinfo {author} {\bibfnamefont {Alexander}\
  \bibnamefont {Maloney}},\ }\bibfield  {title} {\enquote {\bibinfo {title}
  {{Holographic Mutual Information is Monogamous}},}\ }\href {\doibase
  10.1103/PhysRevD.87.046003} {\bibfield  {journal} {\bibinfo  {journal} {Phys.
  Rev.}\ }\textbf {\bibinfo {volume} {D87}},\ \bibinfo {pages} {046003}
  (\bibinfo {year} {2013})},\ \Eprint {http://arxiv.org/abs/1107.2940}
  {arXiv:1107.2940 [hep-th]} \BibitemShut {NoStop}%
\bibitem [{\citenamefont {Bao}\ \emph {et~al.}(2015)\citenamefont {Bao},
  \citenamefont {Nezami}, \citenamefont {Ooguri}, \citenamefont {Stoica},
  \citenamefont {Sully},\ and\ \citenamefont {Walter}}]{Bao:2015bfa}%
  \BibitemOpen
  \bibfield  {author} {\bibinfo {author} {\bibfnamefont {Ning}\ \bibnamefont
  {Bao}}, \bibinfo {author} {\bibfnamefont {Sepehr}\ \bibnamefont {Nezami}},
  \bibinfo {author} {\bibfnamefont {Hirosi}\ \bibnamefont {Ooguri}}, \bibinfo
  {author} {\bibfnamefont {Bogdan}\ \bibnamefont {Stoica}}, \bibinfo {author}
  {\bibfnamefont {James}\ \bibnamefont {Sully}}, \ and\ \bibinfo {author}
  {\bibfnamefont {Michael}\ \bibnamefont {Walter}},\ }\bibfield  {title}
  {\enquote {\bibinfo {title} {{The Holographic Entropy Cone}},}\ }\href
  {\doibase 10.1007/JHEP09(2015)130} {\bibfield  {journal} {\bibinfo  {journal}
  {JHEP}\ }\textbf {\bibinfo {volume} {09}},\ \bibinfo {pages} {130} (\bibinfo
  {year} {2015})},\ \Eprint {http://arxiv.org/abs/1505.07839} {arXiv:1505.07839
  [hep-th]} \BibitemShut {NoStop}%
\bibitem [{\citenamefont {Pippenger}(2003)}]{1193790}%
  \BibitemOpen
  \bibfield  {author} {\bibinfo {author} {\bibfnamefont {N.}~\bibnamefont
  {Pippenger}},\ }\bibfield  {title} {\enquote {\bibinfo {title} {The
  inequalities of quantum information theory},}\ }\href {\doibase
  10.1109/TIT.2003.809569} {\bibfield  {journal} {\bibinfo  {journal} {IEEE
  Transactions on Information Theory}\ }\textbf {\bibinfo {volume} {49}},\
  \bibinfo {pages} {773--789} (\bibinfo {year} {2003})}\BibitemShut {NoStop}%
\bibitem [{\citenamefont {Linden}\ \emph {et~al.}(2013)\citenamefont {Linden},
  \citenamefont {Mat\'u\v{s}}, \citenamefont {Ruskai},\ and\ \citenamefont
  {Winter}}]{Linden:2013kal}%
  \BibitemOpen
  \bibfield  {author} {\bibinfo {author} {\bibfnamefont {Noah}\ \bibnamefont
  {Linden}}, \bibinfo {author} {\bibfnamefont {Franti\v{s}ek}\ \bibnamefont
  {Mat\'u\v{s}}}, \bibinfo {author} {\bibfnamefont {Mary~Beth}\ \bibnamefont
  {Ruskai}}, \ and\ \bibinfo {author} {\bibfnamefont {Andreas}\ \bibnamefont
  {Winter}},\ }\bibfield  {title} {\enquote {\bibinfo {title} {{The Quantum
  Entropy Cone of Stabiliser States}},}\ }\href {\doibase
  10.4230/LIPIcs.TQC.2013.270} {\bibfield  {journal} {\bibinfo  {journal}
  {Leibniz Int. Proc. Inf.}\ }\textbf {\bibinfo {volume} {22}},\ \bibinfo
  {pages} {270--284} (\bibinfo {year} {2013})},\ \Eprint
  {http://arxiv.org/abs/1302.5453} {arXiv:1302.5453 [quant-ph]} \BibitemShut
  {NoStop}%
\bibitem [{\citenamefont {Bao}\ \emph {et~al.}(2020{\natexlab{a}})\citenamefont
  {Bao}, \citenamefont {Cheng}, \citenamefont {Hern\'andez-Cuenca},\ and\
  \citenamefont {Su}}]{Bao:2020zgx}%
  \BibitemOpen
  \bibfield  {author} {\bibinfo {author} {\bibfnamefont {Ning}\ \bibnamefont
  {Bao}}, \bibinfo {author} {\bibfnamefont {Newton}\ \bibnamefont {Cheng}},
  \bibinfo {author} {\bibfnamefont {Sergio}\ \bibnamefont
  {Hern\'andez-Cuenca}}, \ and\ \bibinfo {author} {\bibfnamefont {Vincent~P.}\
  \bibnamefont {Su}},\ }\bibfield  {title} {\enquote {\bibinfo {title} {{The
  Quantum Entropy Cone of Hypergraphs}},}\ }\href {\doibase
  10.21468/SciPostPhys.9.5.067} {\bibfield  {journal} {\bibinfo  {journal}
  {SciPost Phys.}\ }\textbf {\bibinfo {volume} {9}},\ \bibinfo {pages} {5}
  (\bibinfo {year} {2020}{\natexlab{a}})},\ \Eprint
  {http://arxiv.org/abs/2002.05317} {arXiv:2002.05317 [quant-ph]} \BibitemShut
  {NoStop}%
\bibitem [{\citenamefont {Walter}\ and\ \citenamefont
  {Witteveen}(2021)}]{Walter:2020zvt}%
  \BibitemOpen
  \bibfield  {author} {\bibinfo {author} {\bibfnamefont {Michael}\ \bibnamefont
  {Walter}}\ and\ \bibinfo {author} {\bibfnamefont {Freek}\ \bibnamefont
  {Witteveen}},\ }\bibfield  {title} {\enquote {\bibinfo {title} {{Hypergraph
  min-cuts from quantum entropies}},}\ }\href {\doibase 10.1063/5.0043993}
  {\bibfield  {journal} {\bibinfo  {journal} {J. Math. Phys.}\ }\textbf
  {\bibinfo {volume} {62}},\ \bibinfo {pages} {092203} (\bibinfo {year}
  {2021})},\ \Eprint {http://arxiv.org/abs/2002.12397} {arXiv:2002.12397
  [quant-ph]} \BibitemShut {NoStop}%
\bibitem [{\citenamefont {Yeung}(1997)}]{641556}%
  \BibitemOpen
  \bibfield  {author} {\bibinfo {author} {\bibfnamefont {R.W.}\ \bibnamefont
  {Yeung}},\ }\bibfield  {title} {\enquote {\bibinfo {title} {A framework for
  linear information inequalities},}\ }\href {\doibase 10.1109/18.641556}
  {\bibfield  {journal} {\bibinfo  {journal} {IEEE Transactions on Information
  Theory}\ }\textbf {\bibinfo {volume} {43}},\ \bibinfo {pages} {1924--1934}
  (\bibinfo {year} {1997})}\BibitemShut {NoStop}%
\bibitem [{\citenamefont {Schilling}(2014)}]{Schilling:2014qmp}%
  \BibitemOpen
  \bibfield  {author} {\bibinfo {author} {\bibfnamefont {Christian}\
  \bibnamefont {Schilling}},\ }\bibfield  {title} {\enquote {\bibinfo {title}
  {The quantum marginal problem},}\ }\href
  {https://api.semanticscholar.org/CorpusID:119311397} {\bibfield  {journal}
  {\bibinfo  {journal} {arXiv: Quantum Physics}\ } (\bibinfo {year}
  {2014})}\BibitemShut {NoStop}%
\bibitem [{\citenamefont {Carlen}\ \emph {et~al.}(2013)\citenamefont {Carlen},
  \citenamefont {Lebowitz},\ and\ \citenamefont {Lieb}}]{Carlen_2013}%
  \BibitemOpen
  \bibfield  {author} {\bibinfo {author} {\bibfnamefont {Eric~A.}\ \bibnamefont
  {Carlen}}, \bibinfo {author} {\bibfnamefont {Joel~L.}\ \bibnamefont
  {Lebowitz}}, \ and\ \bibinfo {author} {\bibfnamefont {Elliott~H.}\
  \bibnamefont {Lieb}},\ }\bibfield  {title} {\enquote {\bibinfo {title} {On an
  extension problem for density matrices},}\ }\href {\doibase
  10.1063/1.4808218} {\bibfield  {journal} {\bibinfo  {journal} {Journal of
  Mathematical Physics}\ }\textbf {\bibinfo {volume} {54}} (\bibinfo {year}
  {2013}),\ 10.1063/1.4808218}\BibitemShut {NoStop}%
\bibitem [{Note1()}]{Note1}%
  \BibitemOpen
  \bibinfo {note} {If $0\in f^{-1}(\protect \EuScript {I})$, then
  $f^{-1}(\protect \EuScript {I})\not \subseteq [{\protect \sf N}']$ and
  ${\protect \sf S}_{f^{-1}(\protect \EuScript {I})}$ is not a component of
  $\protect \vec {{\protect \sf S}}(\rho _{{\protect \sf N}'})$. However, the
  entropy of its complement in $\llbracket {\protect \sf N}'\rrbracket $ is and
  is equal to ${\protect \sf S}_{f^{-1}(\protect \EuScript {I})}$. This is what
  the right-hand side of Eq.~\protect \textup {\hbox {\mathsurround \z@
  \protect \normalfont (\ignorespaces \ref {eq:ent-coarsegrain}\unskip
  \@@italiccorr )}} refers to implicitly.}\BibitemShut {Stop}%
\bibitem [{\citenamefont {He}\ \emph {et~al.}(2023)\citenamefont {He},
  \citenamefont {Hubeny},\ and\ \citenamefont {Rota}}]{He:2022bmi}%
  \BibitemOpen
  \bibfield  {author} {\bibinfo {author} {\bibfnamefont {Temple}\ \bibnamefont
  {He}}, \bibinfo {author} {\bibfnamefont {Veronika~E.}\ \bibnamefont
  {Hubeny}}, \ and\ \bibinfo {author} {\bibfnamefont {Massimiliano}\
  \bibnamefont {Rota}},\ }\bibfield  {title} {\enquote {\bibinfo {title} {{On
  the relation between the subadditivity cone and the quantum entropy cone}},}\
  }\href {\doibase 10.1007/JHEP08(2023)018} {\bibfield  {journal} {\bibinfo
  {journal} {JHEP}\ }\textbf {\bibinfo {volume} {08}},\ \bibinfo {pages} {018}
  (\bibinfo {year} {2023})},\ \Eprint {http://arxiv.org/abs/2211.11858}
  {arXiv:2211.11858 [hep-th]} \BibitemShut {NoStop}%
\bibitem [{Note2()}]{Note2}%
  \BibitemOpen
  \bibinfo {note} {By simple we mean the hypergraph has no loops and no
  repeated edges.}\BibitemShut {Stop}%
\bibitem [{Note3()}]{Note3}%
  \BibitemOpen
  \bibinfo {note} {A conical combination of two entropy vectors realized by
  hypergraphs is realized by the hypergraph obtained by the disjoint union of
  the hypergraphs with the weights appropriately rescaled.}\BibitemShut {Stop}%
\bibitem [{\citenamefont {Helwig}\ and\ \citenamefont
  {Cui}(2013)}]{Helwig:2013ckb}%
  \BibitemOpen
  \bibfield  {author} {\bibinfo {author} {\bibfnamefont {Wolfram}\ \bibnamefont
  {Helwig}}\ and\ \bibinfo {author} {\bibfnamefont {Wei}\ \bibnamefont {Cui}},\
  }\bibfield  {title} {\enquote {\bibinfo {title} {{Absolutely Maximally
  Entangled States: Existence and Applications}},}\ }\href@noop {} {\
  (\bibinfo {year} {2013})},\ \Eprint {http://arxiv.org/abs/1306.2536}
  {arXiv:1306.2536 [quant-ph]} \BibitemShut {NoStop}%
\bibitem [{\citenamefont {Ingleton}(1971)}]{zbMATH03351601}%
  \BibitemOpen
  \bibfield  {author} {\bibinfo {author} {\bibfnamefont {A.~W.}\ \bibnamefont
  {Ingleton}},\ }\bibfield  {title} {\enquote {\bibinfo {title} {Representation
  of matroids},}\ }\href@noop {} {\  (\bibinfo {year} {1971})}\BibitemShut
  {NoStop}%
\bibitem [{\citenamefont {He}\ \emph {et~al.}()\citenamefont {He},
  \citenamefont {Hubeny},\ and\ \citenamefont {Rota}}]{He:2022wip}%
  \BibitemOpen
  \bibfield  {author} {\bibinfo {author} {\bibfnamefont {Temple}\ \bibnamefont
  {He}}, \bibinfo {author} {\bibfnamefont {Veronika}\ \bibnamefont {Hubeny}}, \
  and\ \bibinfo {author} {\bibfnamefont {Massimiliano}\ \bibnamefont {Rota}},\
  }\bibfield  {title} {\enquote {\bibinfo {title} {{Algorithmic construction of
  SSA-compatible extreme rays of the subadditivity cone and the $\N=6$
  solution}},}\ }\href@noop {} {\ }\bibinfo {note} {In preparation}\BibitemShut
  {NoStop}%
\bibitem [{\citenamefont {Bao}\ \emph {et~al.}(2020{\natexlab{b}})\citenamefont
  {Bao}, \citenamefont {Cheng}, \citenamefont {Hern\'andez-Cuenca},\ and\
  \citenamefont {Su}}]{Bao:2020mqq}%
  \BibitemOpen
  \bibfield  {author} {\bibinfo {author} {\bibfnamefont {Ning}\ \bibnamefont
  {Bao}}, \bibinfo {author} {\bibfnamefont {Newton}\ \bibnamefont {Cheng}},
  \bibinfo {author} {\bibfnamefont {Sergio}\ \bibnamefont
  {Hern\'andez-Cuenca}}, \ and\ \bibinfo {author} {\bibfnamefont
  {Vincent~Paul}\ \bibnamefont {Su}},\ }\bibfield  {title} {\enquote {\bibinfo
  {title} {{A Gap Between the Hypergraph and Stabilizer Entropy Cones}},}\
  }\href@noop {} {\  (\bibinfo {year} {2020}{\natexlab{b}})},\ \Eprint
  {http://arxiv.org/abs/2006.16292} {arXiv:2006.16292 [quant-ph]} \BibitemShut
  {NoStop}%
\bibitem [{\citenamefont {Mao}\ and\ \citenamefont
  {Hassibi}(2009)}]{Mao2009ViolatingTI}%
  \BibitemOpen
  \bibfield  {author} {\bibinfo {author} {\bibfnamefont {Wei}\ \bibnamefont
  {Mao}}\ and\ \bibinfo {author} {\bibfnamefont {Babak}\ \bibnamefont
  {Hassibi}},\ }\bibfield  {title} {\enquote {\bibinfo {title} {Violating the
  ingleton inequality with finite groups},}\ }\href
  {https://api.semanticscholar.org/CorpusID:12433527} {\bibfield  {journal}
  {\bibinfo  {journal} {2009 47th Annual Allerton Conference on Communication,
  Control, and Computing (Allerton)}\ ,\ \bibinfo {pages} {1053--1060}}
  (\bibinfo {year} {2009})}\BibitemShut {NoStop}%
\end{thebibliography}%

\end{document}